\documentclass[aps,pra,11pt,superscriptaddress,notitlepage,bibnotes,nofootinbib]{revtex4-1}

\usepackage{amsthm,amsmath,amssymb,amsbsy}

\usepackage{hyperref}
\usepackage{url}

\usepackage[normalem]{ulem}

\usepackage{graphicx}
\usepackage{bm,bbm}%
\usepackage{braket}
\usepackage{enumerate}
\usepackage{booktabs,multirow}
\usepackage{mathtools,bbding}
\usepackage{microtype}
\usepackage{mathrsfs}

\usepackage{txfonts,newtxmath}

\usepackage[most,breakable]{tcolorbox}

\newcommand{\nc}{\newcommand}
\newcommand{\rnc}{\renewcommand}

\makeatletter
\newcommand*\rel@kern[1]{\kern#1\dimexpr\macc@kerna}
\newcommand*\widebar[1]{%
  \begingroup
  \def\mathaccent##1##2{%
    \rel@kern{0.8}%
    \overline{\rel@kern{-0.8}\macc@nucleus\rel@kern{0.2}}%
    \rel@kern{-0.2}%
  }%
  \macc@depth\@ne
  \let\math@bgroup\@empty \let\math@egroup\macc@set@skewchar
  \mathsurround\z@ \frozen@everymath{\mathgroup\macc@group\relax}%
  \macc@set@skewchar\relax
  \let\mathaccentV\macc@nested@a
  \macc@nested@a\relax111{#1}%
  \endgroup
}

\rnc{\thesection}{\arabic{section}}
\rnc{\thesubsection}{\thesection.\arabic{subsection}}
\rnc{\thesubsubsection}{\thesubsection.\arabic{subsubsection}}

\usepackage{etoolbox}
\hypersetup{linktoc=all}


\newtheorem{definition}{Definition}

\newtheorem{theorem}{Theorem}
\newtheorem*{theorem*}{Theorem}

\DeclareMathOperator{\Tr}{Tr}


\nc{\tcr}[1]{{\color{red} #1}}
\nc{\tcb}[1]{{\color{blue} #1}}

\nc{\psic}{\psi^{c}}
\nc{\two}[1]{\underline{2^{d-#1}}}
\rnc{\H}{\mathcal{H}}
\nc{\Hanc}{\mathcal{H}_{\text{anc}}}
\nc{\psianc}{\psi_{\text{anc}}}
\nc{\lampow}{\lambda^{1/d}}
\nc{\tl}[1]{\tilde{#1}}
\nc{\CHSH}{\text{CHSH}}

\nc{\inner}[1]{\langle #1 \rangle}
\nc{\proj}[1]{\ket{#1}\!\bra{#1}}
\nc{\pro}[1]{#1 #1^\dagger}

\nc{\RR}{{{\mathbb R}}}
\nc{\CC}{{{\mathbb C}}}
\nc{\FF}{{{\mathbb F}}}
\nc{\NN}{{{\mathbb N}}}
\nc{\ZZ}{{{\mathbb Z}}}

\nc{\MIO}{{\text{\rm MIO}}}
\nc{\DIO}{{\text{\rm DIO}}}
\nc{\SIO}{{\text{\rm SIO}}}
\nc{\IO}{{\text{\rm IO}}}

\nc{\SEP}{{\text{SEP}}}
\nc{\NS}{{\text{NS}}}
\nc{\LOCC}{{\text{LOCC}}}
\nc{\PPT}{{\text{PPT}}}
\nc{\EXT}{{\text{EXT}}}
\nc{\OLOCC}{{\text{1-LOCC}}}
\nc{\SEPP}{{\text{SEPP}}}

\nc{\MC}{{\text{\rm MC}}}

\nc{\cE}{\mathscr{E}}

\nc{\nn}{\nonumber}

\rnc{\*}{\textup{*}}


\nc{\ketbra}[1]{\ket{#1}\!\!\bra{#1}}


\newcommand{\N}{\mathcal{N}}

\renewcommand{\P}{\mathcal{P}}

\nc{\MM}{\widetilde{\M}}
\nc{\Ml}{\M^{\leq}}

\nc{\mleq}{\preceq}
\nc{\mgeq}{\succeq}

\nc{\ox}{\otimes}

\nc{\wt}{\widetilde}

\nc{\SDP}{\text{\rm SDP}}

\nc{\cc}{{\circ\circ}}
\nc{\mnorm}[1]{\norm{#1}{[m]}}

\nc{\F}{\mathcal{F}}
\nc{\M}{\mathcal{M}}

\let\oldproofname\proofname
\rnc{\proofname}{\rm\bf{\oldproofname}}
\rnc{\qedsymbol}{{\color{gray!50!black}\rule{0.6em}{0.6em}}}

\newcommand{\bb}{\begin{equation}}
\newcommand{\bbb}{\begin{equation*}}
\newcommand{\ee}{\end{equation}}
\newcommand{\eee}{\end{equation*}}

\nc{\note}[1]{{\color{blue!90!black} #1}}
\usepackage[hmargin=2cm,vmargin=2.5cm]{geometry}
\usepackage{sidecap}

\renewcommand*\env@matrix[1][*\c@MaxMatrixCols c]{%
  \hskip -\arraycolsep
  \let\@ifnextchar\new@ifnextchar
  \array{#1}}


\begin{abstract}
	{Bell’s theorem shows that no hidden-variable model can explain the measurement statistics of a quantum system shared between two parties, thus ruling out a classical (local) understanding of nature. In this work we demonstrate that by relaxing the positivity restriction in the hidden-variable probability distribution it is possible to derive {quasiprobabilistic} Bell inequalities whose sharp upper bound is written in terms of a negativity witness of said distribution. 
		This provides an analytic solution for the amount of negativity necessary to violate the CHSH inequality by an arbitrary amount, therefore revealing the amount of negativity required to emulate the quantum statistics in a Bell test.  }
\end{abstract}
\begin{document}
	\title{\Large Witnessing Bell violations through probabilistic negativity}
	\author{Benjamin Morris} 
	\email{morris.quantum@gmail.com} 
	\affiliation{\small School of Physics and Astronomy and Centre for the Mathematics and Theoretical Physics of Quantum Non-Equilibrium Systems,
		University of Nottingham, University Park, Nottingham NG7 2RD, United Kingdom}
	\author{Lukas J.~Fiderer}
	\affiliation{\small Institute for Theoretical Physics, University of Innsbruck, 6020 Innsbruck, Austria}
	\author{Ben Lang}
	\affiliation{\small School of Physics and Astronomy and Centre for the Mathematics and Theoretical Physics of Quantum Non-Equilibrium Systems,
		University of Nottingham, University Park, Nottingham NG7 2RD, United Kingdom} 
	\author{Daniel Goldwater} 
	\affiliation{\small School of Mathematical Sciences and Centre for the Mathematics and Theoretical Physics of Quantum Non-Equilibrium Systems,
		University of Nottingham, University Park, Nottingham NG7 2RD, United Kingdom}
	\maketitle

	\section{Introduction} 
	It has now been 60 years since John Stewart Bell wrote his famous paper on the {Einstein}-Podolsky-Rosen (EPR) paradox \cite{bell1964einstein}, and 50 years since the first experimental Bell test \cite{freedman1972experimental}. The majority of physicists are perfectly happy to concede that in the lab we see experimental results consistent with the postulates of quantum mechanics. However, the implications of these mathematical postulates on the `reality' of the wavefunction is still very much up for debate \cite{caves2002quantum,harrigan2010einstein,pusey2012reality,colbeck2012system,mermin2014physics,ringbauer2015measurements}.

	These Bell experiments remain as some of the most important demonstrations for the reality of the quantum state and the death of a `local realism' picture of nature. In such an experiment a physical system is distributed between spatially separated observers, and we allow these observers to perform measurements on their local system. The emerging statistics prove that physical systems are not bound to behave locally (in accordance to local hidden-variable models). Rather, the statistics are consistent with the postulates governing quantum {mechanics}.
	
	In this work we remove the postulates of quantum mechanics and instead allow a physical system to be distributed according to a {quasiprobability (hidden-variable) distribution that is allowed to take negative values}. Although we are perfectly content with real negative numbers in physics, negative {quasi}probabilities-- despite receiving support from individuals such as Dirac \cite{dirac1942bakerian} and Feynman \cite{feynman1987negative} and having a solid mathematical foundation \cite{ruzsa1988algebraic, khrennikov2007generalized}-- have been a long debated issue in theoretical physics \cite{muckenheim1986review}. See, for example, the extensive discussion surrounding the interpretation of negative values in the Wigner distribution \cite{ferrie2011quasi,veitch2012negative}. {{ In the majority of considerations, quasiprobability distributions are used to describe states that are not directly observed}; that is, all observable measurement statistics must be governed by ordinary probability distributions. {As an example{,} a Wigner function may assign a negative quasiprobability to a particle having a particular position/momentum combination, but any physical measurement, constrained by Heisenberg uncertainty, will have an all-positive outcome distribution.} This feature ensures that no outcome is ever predicted to be seen occurring a negative number of times \cite{feynman1987negative}, and similarly protects the quasiprobability physicist from falling victim to `Dutch book' arguments \cite[Ch.3]{de2017theory}}.
	
	An important motivator for this work is the result of Al-Safi and Short \cite{al2013simulating} (expanded upon by the authors of \cite{oas2014exploring}) which showed that it is possible to simulate all non-signalling correlations, (those which adhere to the principles of {special }relativity) \cite{popescu1994quantum,peres2004quantum}, if one allows negative values in a probability distribution. However, physical reality does not explore this full set of correlations -- but rather, is restricted to those achievable by quantum correlations. Therefore the question that we pose in this work is:\begin{quote}\center
		``\textit{What are the restrictions on the negativity in a {hidden-variable probability distribution} such that it can emulate the statistics seen in a physical Bell experiment?}'' 
	\end{quote}  \vspace{1em}
	In order to answer this question we  {construct
	CHSH inequalities} for two parties \cite{clauser1969proposed} whose degree of violation is witnessed by the amount of negativity present in the hidden-variable probability distribution. Our witness { yields}  a value of 0 for a quasiprobability distribution which is entirely positive, such as that which
	would describe an ordinary classical system.

	We start by describing the setup necessary for the construction of these nonlocal experiments, introducing the probability distributions admitted by classical, quantum, and non-signalling {theories}, and giving the famous Bell scores that these distributions respectively allow one to reach in such nonlocal experiments.  We then give the definition of a quasiprobability distribution and motivate \textit{negativity witnesses} as a quantitative method of detecting negativity in said distributions. Our main result is that the violation of the CHSH inequality (and $n$-measurement generalisations) can be exactly characterised by a negativity witness of the hidden-variable distribution defined over the local states, and that there exists quasiprobability distributions which can saturate (up to the no-signalling limit) any such violation, whilst still having well-defined local statistics. This shows that it is possible to {recapture} the nonlocal features of {Bell experiments} through having a finite amount of negativity allowed in a hidden-variable distribution over scenarios which are, in themselves, entirely local and classical.    
	\begin{SCfigure*}
	\centering
		\includegraphics[width=0.6\textwidth]{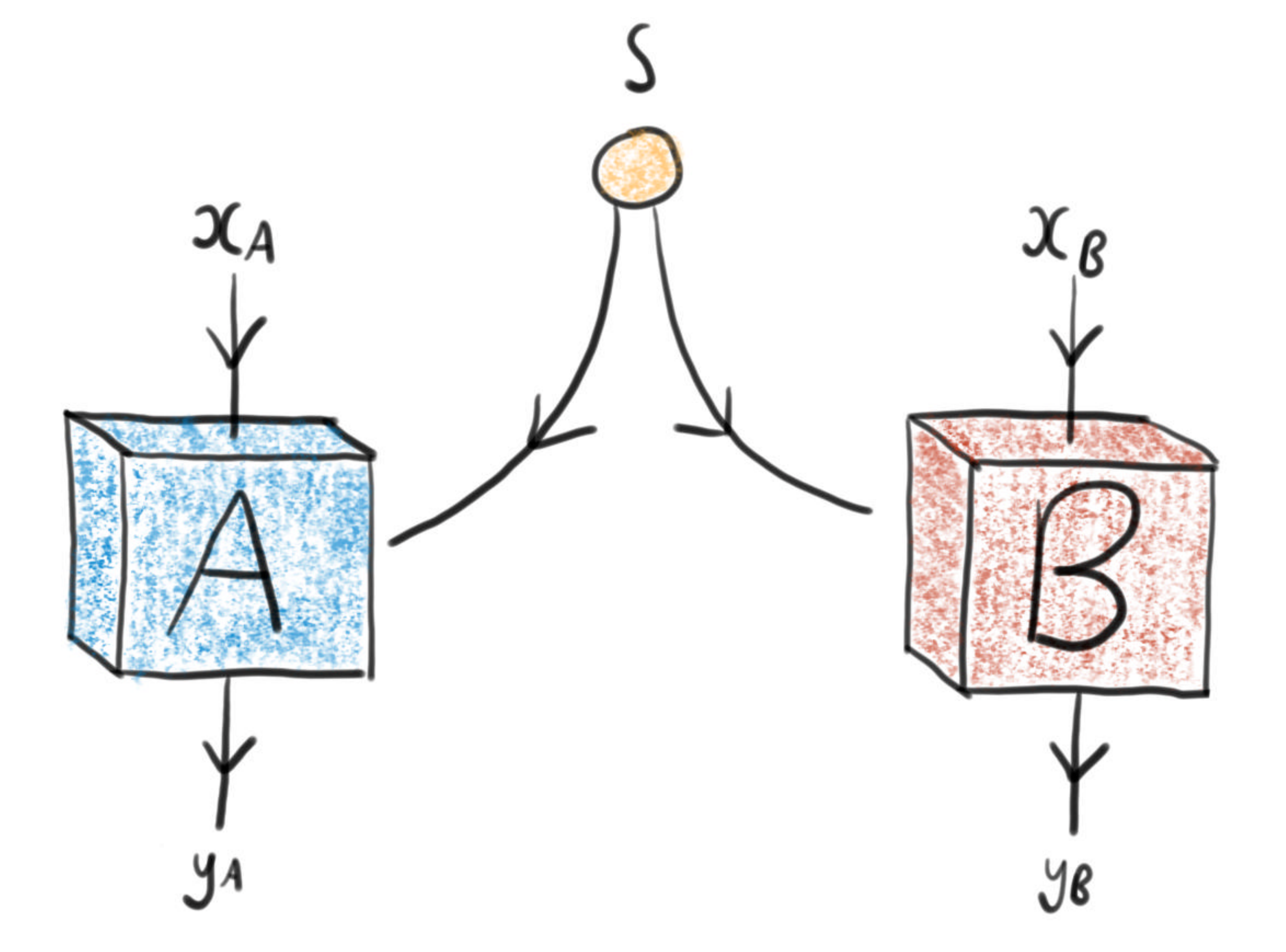}
	\caption{A source $S$ distributes a system between two spatially separated observers, Alice ($A$) and Bob ($B$). Alice and Bob choose to measure their local part of the system with measurements $x_A,x_B$, with possible outcomes  {$y_{A},y_B\in\{-1,+1\}$}. The statistics of their measurement outcomes depending upon the physical system being distributed by the source.     \label{fig:box1}}
	\end{SCfigure*}
		\section{Setup}
	{Let us consider the following experimental setup. A source $S$ distributes a system between $2$ observers, the $k$th observer can choose some measurement $x_k\in\{0,1,\dots,X_k\}$ and record some outcome $y_k\in\{1, 2,\dots,Y_k\}$, the possible values of $k$ being $\{A,B\}$. A specific experimental {setup} is characterised by the conditional probability, 
		\begin{align}
		P\left(y_A,y_B|x_A,x_B\right).
		\end{align}
		{The physical theory} governing the behaviour of the system and experiment determines the achievability of certain conditional probability distributions resulting from these experiments. We are interested in the following three physical theories:}

	$\bullet$ \textit{Classical theory} admits probability distributions of the following form, 
	\begin{align}
	P\left(y_A,y_B|x_A,x_B\right)=\sum_{\lambda_A,\lambda_B}P_A\left(y_A|x_A,\lambda_A\right)P_B\left(y_B|x_B,\lambda_B\right)P_\Lambda\left(\lambda_A,\lambda_B\right),\label{eq:classical}
	\end{align}
	where $P_\Lambda\left(\lambda_A,\lambda_B\right)$ is {a} joint probability distribution defined over local hidden variables. {With each choice of hidden variables we associate a local scenario governed by ordinary local probability distributions $P_k\left(y_k|x_k,\lambda_k\right)$ for the observables $y_A,y_B$. The hidden-variable probability distribution $P_\Lambda$ determines how such local scenarios are mixed, and the probability distributions $P_k\left(y_k|x_k,\lambda_k\right)$ are called ``$\lambda_k$-local'' because they belong to the scenario associated with a particular value of $\lambda_k$, not to be confused with the (observable) marginal probability distributions that are obtained by marginalising the total probability distribution, equation \eqref{eq:classical}.}

	{$\bullet$ \textit{Quantum theory} endows us with a Hilbert space structure for our quantum states that  admits probability distributions of the following form, \begin{align}
		P\left(y_A,y_B|x_A,x_B\right)=\Tr\left[\left(M^{(A)}_{y_A|x_A}\ox M^{(B)}_{y_B|x_B}\right)\rho\right],\label{eq:prob_quantum}
		\end{align}
		where $\rho\succeq 0$ and $M^{(k)}_{y_k|x_k}$ are Positive Operator-Valued Measures (POVMs) \cite{2000quantum} for each $k$. }
	
	{$\bullet$ \textit{No-signalling principle}, our third physical theory, prohibits the sending of information faster than the speed of light \cite{popescu1994quantum,peres2004quantum}. Such a theory has the conditions on its conditional probability distribution that for any {$k\in \{A,B\}$}, 
		\begin{align}
		\sum_{y_k} P\left(y_A,y_B|x_A,x_B\right)\label{eq:cond_prob}
		\end{align} is independent of $x_k$.} 
	These three physical theories ranging from the most restrictive(classical), to the least restrictive (no-signalling), with quantum theory existing somewhere between the two \cite{popescu1994quantum}.
	
	{Representing the full set of correlations that the quantum conditional probability distribution in equation \eqref{eq:prob_quantum} allows one to reach is a notorious problem, and the set has recently been shown to be not closed \cite{slofstra2019set}. Therefore we instead restrict ourselves to studying the achievable bounds that these conditional probabilities allow {one} to reach in nonlocal experiments {; the} original and most famous of which being the Bell inequality \cite{bell1964einstein}.
		\begin{definition}[Bell inequality] Given observers {$A$ and $B$}, each with measurement choice $x_k\in\{0_k,1_k\}$ with outcomes $y_k\in\{-1,+1\}$, experiments performed on the systems adhere to the bound,
			\begin{align}
			\left|E(0_A,0_B)-E(0_A,1_B)+E(1_A,0_B)+E(1_A,1_B)\right|\leq X,\label{equ:all_bell}
			\end{align}
			where both the correlation measure $E(x_A,x_B)=\sum_{y_A,y_B}y_Ay_B\, P(y_A,y_B|x_A,x_B)$ and bound $X\in\mathbb{R}^+$ are theory {dependent}. {The left-hand side of that inequality is often called the score of the experiment.}
			\label{def:Bell}
	\end{definition}}
	Each physical theory admits a different conditional probability distribution, and hence a different achievable bound $X$. 
	{Classical theory has the CHSH bound $X=2$ \cite{clauser1969proposed}, quantum theory has the Tsirelson bound of $X=2\sqrt{2}$ \cite{cirel1980quantum} and non-signalling distributions $X=4$ \cite{popescu1994quantum}.} We are interested in the achievable bounds of a classical system's probability distribution when the hidden-variable distribution in said probability distribution can be negative.
	
	\section{Results}

	We now define an important object for this work, the \textit{quasiprobability distribution}.
	\begin{definition}[Quasiprobability distribution]\label{def:quasi_prob}
		We define a quasiprobability distribution as {$\tl{P}_\Lambda:\Lambda_1\times\dots\times\Lambda_N\rightarrow\mathbb{R}$} where {$\Lambda_i\subset \mathbb{R}$ and } $|\Lambda_i|<\infty\,\forall \,i$, that is properly normalised, such that,
		\begin{align}
		\sum_{\lambda_1,\dots,\lambda_N}\tl{P}_\Lambda\left(\lambda_1,\dots,\lambda_N\right)=1.
		\end{align}
	\end{definition}
	It can be seen that the collection of functions adhering to the above definition forms a convex set, which we will denote $\tl{\P}$, a super set of the convex set of positive probability distributions $\P\subset\tl{\P}$.
	We must now determine how to quantify the presence of negativity in our {quasiprobability distributions}. To this end we will use a well{-}known method for quantitatively detecting properties of a quantum state, witnesses \cite{terhal2000bell,lewenstein2000optimization,eisert2007quantitative,chabaud2021witnessing}. Let us therefore proceed by defining a \emph{negativity witness:} 
	\begin{definition}[Negativity witness]\label{def:neg_witness} Given some properly normalised probability distribution $P$, a well-defined \textit{negativity witness} is one which, 
		\begin{align}
		\mathcal{N}(P)=0\,\,\forall\,P\in{\P}.
		\end{align}
		We may additionally require such a witness to `faithfully' detect negativity, 
		\begin{align}
		\mathcal{N}(P)>0\,\,\forall\,P\in \tilde{\P}{\backslash}\P.
		\end{align}
	\end{definition}

	In the following, we consider classical local hidden-variable models{ as defined in } equation \eqref{eq:classical}, but we replace the hidden-variable probability distribution $P_\Lambda$ with a quasiprobability distribution $\tilde{P}_\Lambda$,
		\begin{align}
		P\left(y_A,y_B|x_A,x_B\right)=\sum_{\lambda_A,\lambda_B}P_1\left(y_A|x_A,\lambda_A\right)P_2\left(y_B|x_B,\lambda_B\right)\tilde{P}_\Lambda\left(\lambda_A,\lambda_B\right).\label{eq:quasi_hidden}
		\end{align}
		
		This corresponds to a scenario where different local statistics of observations, governed by {$\lambda$-local (ordinary)} probability distributions $P_k\left(y_k|x_k,\lambda_k\right)$, are mixed according to a quasiprobability distribution $\tilde{P}_\Lambda$.
		{However, when $\tilde{P}_\Lambda$ takes negative values, we should no longer think of the model as an ignorance mixture of valid local scenarios but rather as a nonlocal model \cite{al2013simulating}. Furthermore, when compared with ordinary hidden-variable models not all combinations of hidden-variable and $\lambda$-local probability distributions are valid; only those combinations  which lead to well-defined $	P\left(y_A,y_B|x_A,x_B\right)$ {are valid}, i.e., comprised of values between 0 and 1 (the normalisation condition is always fulfilled).} 
		
 In addition to the correlation function  between {two measurements $x_A$ and $x_B$,
${E}(x_A,x_B)$, it will also be useful to define $\lambda_k$-local expectation values {corresponding to an imagined scenario where} observer $k$ {is able to perform} measurement $x_k$ in {the} local scenario {corresponding to} $\lambda_k$,
		\begin{align}
		\inner{k}_{\lambda_k}^{x_k}:=\sum_{y_k}y_k P_k(y_k|x_k,\lambda_k). \label{eq:local_exp}
		\end{align}
		This $\lambda_k$-local expectation value will be useful to formulate our results, but does not correspond to the actual observations which are themselves governed by equation \eqref{eq:quasi_hidden}. }
	
	We are now in a position to state the main result of this work{, the quasiprobabilistic Bell inequality}.
	\begin{theorem}[{Quasiprobabilistic Bell inequality}]\label{thm:CHSH_break}
		Given observers {$A$ and $B$}, each with measurement choice $x_k\in\{0_k,1_k\}$ with outcomes $y_k\in\{-1,+1\}$ whose systems are distributed according to some  {quasi}probability  distribution $\tilde{P}_\Lambda$, then the quasiprobabilistic Bell inequality holds:
		\begin{align}
		\left|{E}(0_A,0_B)-{E}(0_A,1_B)+{E}(1_A,0_B)+{E}(1_A,1_B)\right|\leq\,2+\mathcal{N}(\tilde{P}_\Lambda),\label{equ:quasi_CHSH2}
		\end{align}
		\color{black}
		{where}
		{\begin{align}
			\N(\tilde{P}_\Lambda)\coloneq\begin{cases}
			\N_+(\tilde{P}_\Lambda) \qquad\text{ if }{E}(1_A,0_B)+{E}(1_A,1_B)<0,\\
			\N_-(\tilde{P}_\Lambda) \qquad\text{ else,}
			\end{cases}
			\end{align}
		{is a negativity witness, and }
			 $$\N_\pm(\tilde{P}_\Lambda):=\sum_{\lambda_A,\lambda_B}\left[2\pm\left(\inner{A}_{\lambda_A}^{1_A}\inner{B}_{\lambda_B}^{1_B}+\inner{A}_{\lambda_A}^{1_A}\inner{B}_{\lambda_B}^{0_B}\right)\right]\left(\Big{|}\tilde{P}_\Lambda(\lambda_A,\lambda_B)\Big{|}-\tilde{P}_\Lambda(\lambda_A,\lambda_B) \right).$$  }
	\end{theorem}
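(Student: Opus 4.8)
The plan is to expand the CHSH score directly in the quasiprobabilistic hidden-variable model \eqref{eq:quasi_hidden} and then bound it term by term, so that every bit of ``excess'' over the classical value $2$ is pinned to the negative part of $\tilde P_\Lambda$. \emph{Step 1 (rewriting the score).} Inserting \eqref{eq:quasi_hidden} into $E(x_A,x_B)=\sum_{y_A,y_B}y_A y_B\,P(y_A,y_B|x_A,x_B)$ and using \eqref{eq:local_exp} gives $E(x_A,x_B)=\sum_{\lambda_A,\lambda_B}\tilde P_\Lambda(\lambda_A,\lambda_B)\,\inner{A}_{\lambda_A}^{x_A}\inner{B}_{\lambda_B}^{x_B}$, so the score equals $S=\sum_{\lambda_A,\lambda_B}\tilde P_\Lambda(\lambda_A,\lambda_B)\,f(\lambda_A,\lambda_B)$ with
\begin{equation*}
f(\lambda_A,\lambda_B):=\inner{A}_{\lambda_A}^{0_A}\big(\inner{B}_{\lambda_B}^{0_B}-\inner{B}_{\lambda_B}^{1_B}\big)+\inner{A}_{\lambda_A}^{1_A}\big(\inner{B}_{\lambda_B}^{0_B}+\inner{B}_{\lambda_B}^{1_B}\big).
\end{equation*}
I also abbreviate $g(\lambda_A,\lambda_B):=\inner{A}_{\lambda_A}^{1_A}\big(\inner{B}_{\lambda_B}^{0_B}+\inner{B}_{\lambda_B}^{1_B}\big)$ --- precisely the bracketed combination in $\N_\pm$ --- and record that $T:=E(1_A,0_B)+E(1_A,1_B)=\sum_{\lambda_A,\lambda_B}\tilde P_\Lambda\, g$.

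\emph{Step 2 (the local inequality).} Each $\inner{k}_{\lambda_k}^{x_k}$ is an expectation value of a $\pm1$-valued outcome under an ordinary probability distribution $P_k(\cdot|x_k,\lambda_k)$, hence lies in $[-1,1]$. Using $|f-g|=\big|\inner{A}_{\lambda_A}^{0_A}\big|\,\big|\inner{B}_{\lambda_B}^{0_B}-\inner{B}_{\lambda_B}^{1_B}\big|\le\big|\inner{B}_{\lambda_B}^{0_B}-\inner{B}_{\lambda_B}^{1_B}\big|$, $|g|\le\big|\inner{B}_{\lambda_B}^{0_B}+\inner{B}_{\lambda_B}^{1_B}\big|$, and the elementary bound $|b-b'|+|b+b'|\le 2$ for $b,b'\in[-1,1]$, I obtain the pointwise estimate $|f-g|\le 2-|g|$, valid for every $(\lambda_A,\lambda_B)$. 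This has two consequences. First, $2\pm g\ge 2-|g|\ge 0$, so $\N_\pm$ is a sum of products of the nonnegative quantities $(2\pm g)$ and $\big(|\tilde P_\Lambda|-\tilde P_\Lambda\big)\ge 0$; hence $\N_\pm\ge 0$, with equality whenever $\tilde P_\Lambda\ge 0$, confirming that $\N$ is a negativity witness in the sense of Definition~\ref{def:neg_witness}. Second, $|f-g|\le 2-|g|$ is the estimate that drives the main bound.

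\emph{Step 3 (assembly and the case split).} From Step 1, $S-T=\sum_{\lambda_A,\lambda_B}(f-g)\,\tilde P_\Lambda$, so the triangle inequality together with Step 2 gives $|S-T|\le\sum_{\lambda_A,\lambda_B}|f-g|\,|\tilde P_\Lambda|\le R$, where $R:=\sum_{\lambda_A,\lambda_B}(2-|g|)\,|\tilde P_\Lambda|\ge 0$; therefore $|S|\le|T|+R$. A one-line rearrangement using the normalisation $\sum_{\lambda_A,\lambda_B}\tilde P_\Lambda=1$ then identifies the witness: $2+\N_-=T+\sum_{\lambda_A,\lambda_B}(2-g)|\tilde P_\Lambda|=T+R+\sum_{\lambda_A,\lambda_B}(|g|-g)|\tilde P_\Lambda|\ge T+R$, and symmetrically $2+\N_+=-T+R+\sum_{\lambda_A,\lambda_B}(|g|+g)|\tilde P_\Lambda|\ge -T+R$, the extra terms being manifestly nonnegative. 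Consequently, if $T\ge 0$ then $|S|\le |T|+R=T+R\le 2+\N_-$, while if $T<0$ then $|S|\le|T|+R=-T+R\le 2+\N_+$; since $T=E(1_A,0_B)+E(1_A,1_B)$, this is exactly the case distinction defining $\N(\tilde P_\Lambda)$, and \eqref{equ:quasi_CHSH2} follows.

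The individual calculations are elementary; the part I would be most careful about is the choice of ``gauge'' in Steps 1--2. Singling out the combination $g$ and proving the pointwise inequality $|f-g|\le 2-|g|$ is what makes the bound on $S$ separate cleanly into a classical piece $\pm T$ plus the nonnegative remainder $R$, and it is also what makes $2+\N_\pm$ equal $\mp T+R$ up to a manifestly nonnegative term --- so that the sign of $T$ alone decides which of $\N_\pm$ gives the sharp bound. Any other organisation of the expansion tends to produce cross terms that cannot be absorbed into a single negativity witness.
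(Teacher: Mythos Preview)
Your proof is correct and follows essentially the same route as the paper: expand the correlators via the hidden-variable model, use $|\langle k\rangle_{\lambda_k}^{x_k}|\le 1$ to control the piece involving measurement $0_A$, invoke normalisation of $\tilde P_\Lambda$ to extract the constant $2$ and the correlators $E(1_A,0_B)+E(1_A,1_B)$, and then perform the case split on the sign of $T=E(1_A,0_B)+E(1_A,1_B)$.

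The only difference is organisational. The paper tracks Bell's 1971 derivation closely: it starts from $|E(0_A,0_B)-E(0_A,1_B)|$, adds and subtracts the cross terms $\inner{A}_{\lambda_A}^{0_A}\inner{B}_{\lambda_B}^{0_B}\inner{A}_{\lambda_A}^{1_A}\inner{B}_{\lambda_B}^{1_B}$ to factor, keeps the $\pm$ ambiguity through several triangle inequalities, and only resolves it at the end. You instead isolate the single pointwise estimate $|f-g|\le 2-|g|$ up front and bound $|S-T|$ directly by $R=\sum(2-|g|)\,|\tilde P_\Lambda|$; the $\pm$ reappears only when you compare $R$ with $\N_\pm$ via the nonnegative defect $\sum(|g|\mp g)\,|\tilde P_\Lambda|$. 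Your packaging is a bit more economical and makes transparent that $R$ is already the sharpest intermediate bound, with the choice of $\N_+$ versus $\N_-$ merely absorbing the slack $|g|\mp g$; the paper's version has the advantage of making the deviation from Bell's classical argument explicit at the exact step where positivity of $P_\Lambda$ would have been used.
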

	The proof of this theorem begins analogously with Bell's proof of the CHSH bound \cite{bell2004speakable}, but diverges when the assumption $P\in\P$ is {made in Bell's proof}. The above result shows that if an arbitrary amount of negativity is allowed in  {the hidden-variable probability distribution} then the upper bound of equation \eqref{equ:quasi_CHSH2}  can be arbitrarily large. However, it should be noted that a natural limit of $4$ in the relevant Bell tests (i.e., for the upper bound in the quasiprobabilistic Bell inequality) is imposed by the requirement that $	P\left(y_A,y_B|x_A,x_B\right)$ is a well-defined, valid probability distribution \footnote{This can be easily seen as for {well-defined experimental setups where $y_k=\{-1,+1\}$, i.e.~for valid} $	P\left(y_A,y_B|x_A,x_B\right)$, each term on the left hand side of equation \eqref{equ:quasi_CHSH2} can be at most $1$ and least $-1$. }. 
	
	The previous result of Al-Safi and Short \cite{al2013simulating} {showed} that it was possible to violate said inequality up to this no-signalling bound of $X=4$. Therefore, in order to emulate the physical results seen in Bell test{s} (Tsirelson bound) one needs a negative probability distribution whose witness equals $\mathcal{N}(\tilde{P}_\Lambda)=2(\sqrt{2}-1)$. In section \ref{sec:examples} we show that for any $\mathcal{N}({\tilde{P}_\Lambda}){\leq 2}$, there exist {quasiprobabilistic hidden-variable models with valid local measurement statistics that}  saturate inequality \eqref{equ:quasi_CHSH2}. We would hope that if a physical mechanism was discovered that allowed a joint hidden-variable probability distribution to have the appearance of negativity, one would expect that said physical mechanism was limited in such a way that it resulted in the Tsirelson bound and more generally was able to reconstruct the limits on quantum correlations.

	It is also important to note that although said witness $\N(\tilde{P}_\Lambda)$ is a valid witness according to definition \ref{def:neg_witness} it is not necessarily a `faithful' one. However this can be rectified, at the cost of loosening the bound, by redefining said witness. For example the function {$\mathcal{N}'({\tilde{P}_\Lambda}):=\sum_{\lambda_A,\lambda_B}4\left(\Big{|}\tilde{P}_\Lambda(\lambda_A,\lambda_B)\Big{|}-\tilde{P}_\Lambda(\lambda_A,\lambda_B) \right)$}, is defined to be both a valid and `faithful' witness.
	
	There are numerous generalisations of the famous CHSH inequalities, {such as} multiple parties \cite{perez2008unbounded}, arbitrary outcomes \cite{cope2019bell}, etc. These would no doubt be interesting to study but we leave it to future work to explore these other generalisations and instead focus on the scenario in which Alice and Bob have access to an arbitrary number of measurement settings \cite{wehner2006tsirelson}.
	\begin{theorem}\label{thm:CHSH_break_N}
		Given observers {$A$ and $B$}, each with $n\geq2$ measurements 
		$x_k\in\{0_k,1_k,\dots,{n-1}_k\}$ with outcomes $y_k\in\{-1,+1\}$ whose systems are distributed according to some  {quasi}probability distribution $\tilde{P}_\Lambda$, 
		{\begin{align}
			\left|\sum^{n-1}_{i=0}{E}(i_A,i_B)+\sum^{n-1}_{i=1}{E}(i_A,i-1_B)-{E}(0_A,n-1_B)\right|\leq\,2n-2+{\mathcal{N}_n({\tilde{P}_\Lambda})}, \label{eq:CHSH_break_N}
			\end{align}}
		where { $\mathcal{N}_n({\tilde{P}_\Lambda})=\sum^{n-1}_{i=1}\N^{(i)}(\tilde{P}_\Lambda)$ {is a negativity witness} with}
		\begin{align}
		\N^{(x)}(\tilde{P}_\Lambda)\coloneq\begin{cases}
		\N^{(x)}_+(\tilde{P}_\Lambda) \qquad\text{ if }{E}(0_A,x_B)+{E}(0_A,x-1_B)<0,\\
		\N^{(x)}_-(\tilde{P}_\Lambda) \qquad\text{ else,}
		\end{cases}\label{eq:neg_cases_n}
		\end{align}
		where $\N^{(x)}_{\pm}({\tilde{P}_\Lambda})\coloneqq\sum_{\lambda_A,\lambda_B}\left[2\pm\left(\inner{A}_{\lambda_A}^{x_A}\inner{B}_{\lambda_B}^{x_B}+\inner{A}_{\lambda_A}^{x_A}\inner{B}_{\lambda_B}^{x-1_B}\right)\right]\left(\Big{|}{\tilde{P}_\Lambda}(\lambda_A,\lambda_B)\Big{|}-{\tilde{P}_\Lambda}(\lambda_A,\lambda_B) \right)$.
	\end{theorem}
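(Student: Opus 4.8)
The plan is to derive Theorem~\ref{thm:CHSH_break_N} directly from Theorem~\ref{thm:CHSH_break} by writing the $n$-setting Bell polynomial as a sum of $n-1$ ordinary CHSH polynomials. First I would insert the quasiprobabilistic hidden-variable model (the $n$-setting analogue of \eqref{eq:quasi_hidden}) into the left-hand side of \eqref{eq:CHSH_break_N}. Writing $a_i:=\inner{A}_{\lambda_A}^{i_A}$ and $b_j:=\inner{B}_{\lambda_B}^{j_B}$ for the $\lambda$-local expectation values of \eqref{eq:local_exp} (each in $[-1,+1]$, being a mean of $\pm1$-valued outcomes), we have $E(i_A,j_B)=\sum_{\lambda_A,\lambda_B}a_i\,b_j\,\tl{P}_\Lambda(\lambda_A,\lambda_B)$, so exchanging the finite sums turns the Bell score of \eqref{eq:CHSH_break_N} into $\sum_{\lambda_A,\lambda_B}\tl{P}_\Lambda(\lambda_A,\lambda_B)\,C_n(\lambda_A,\lambda_B)$, where $C_n$ is the chained Bell polynomial evaluated on these local values. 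I would then telescope $C_n$: from $a_0(b_0-b_{n-1})=\sum_{x=1}^{n-1}a_0(b_{x-1}-b_x)$ together with the obvious regrouping of the $a_ib_i$ and $a_ib_{i-1}$ terms,
\begin{align}
  C_n(\lambda_A,\lambda_B)=\sum_{x=1}^{n-1}\Big[\,a_x\big(b_{x-1}+b_x\big)+a_0\big(b_{x-1}-b_x\big)\,\Big].
\end{align}

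The point of this rewriting is that the $x$-th summand is exactly the CHSH polynomial of the two-setting sub-experiment in which Alice uses the settings $\{0_A,x_A\}$ and Bob uses $\{x-1_B,x_B\}$; once weighted by $\tl{P}_\Lambda$ it becomes the left-hand side of the quasiprobabilistic CHSH inequality of Theorem~\ref{thm:CHSH_break} for that sub-experiment, with $a_x(b_{x-1}+b_x)$ playing the role of $\inner{A}_{\lambda_A}^{1_A}\inner{B}_{\lambda_B}^{1_B}+\inner{A}_{\lambda_A}^{1_A}\inner{B}_{\lambda_B}^{0_B}$ and the case distinction of Theorem~\ref{thm:CHSH_break} becoming \eqref{eq:neg_cases_n}, so that its absolute value is at most $2+\N^{(x)}(\tl{P}_\Lambda)$. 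The triangle inequality then yields $\big|\sum_{\lambda_A,\lambda_B}\tl{P}_\Lambda(\lambda_A,\lambda_B)\,C_n(\lambda_A,\lambda_B)\big|\le\sum_{x=1}^{n-1}\big(2+\N^{(x)}(\tl{P}_\Lambda)\big)=2n-2+\mathcal{N}_n(\tl{P}_\Lambda)$, which is precisely \eqref{eq:CHSH_break_N}. That $\mathcal{N}_n=\sum_{x}\N^{(x)}$ is a negativity witness in the sense of Definition~\ref{def:neg_witness} is then immediate: for $\tl{P}_\Lambda\in\P$ one has $|\tl{P}_\Lambda|-\tl{P}_\Lambda\equiv0$, so each $\N^{(x)}$, and hence $\mathcal{N}_n$, vanishes; for $n=2$ the whole statement reduces to Theorem~\ref{thm:CHSH_break}.

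Because the witness bound of Theorem~\ref{thm:CHSH_break} holds for an arbitrary quasiprobability distribution (no validity of $P(y_A,y_B|x_A,x_B)$ is required for it), adding $n-1$ copies of it is legitimate, so the argument is short once the decomposition is found. I expect the only delicate step to be exactly that decomposition: one has to produce the regrouping that expresses $C_n$ as a sum of genuine CHSH polynomials sharing the single quasidistribution $\tl{P}_\Lambda$, and then verify that the block labelling is consistent with the stated correction term $a_x(b_{x-1}+b_x)$ of $\N^{(x)}$ and with the sign rule \eqref{eq:neg_cases_n} -- with the $n=2$ case serving as the consistency check against Theorem~\ref{thm:CHSH_break}.
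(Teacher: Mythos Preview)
Your proof is correct and follows essentially the same route as the paper's. The paper packages the argument as an induction on $n$ in the Braunstein--Caves chained-CHSH style---adding and subtracting $E(0_A,(k-1)_B)$ at each step to peel off one CHSH block via the triangle inequality---which unwinds exactly to your direct telescoping decomposition $C_n=\sum_{x=1}^{n-1}\big[a_x(b_{x-1}+b_x)+a_0(b_{x-1}-b_x)\big]$ followed by $n-1$ applications of Theorem~\ref{thm:CHSH_break}.
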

	The proof of the above theorem can be found in appendix \ref{app:proof_thm2}, it utilises proof by induction by chaining together the inequalities from theorem \ref{thm:CHSH_break}. In section \ref{sec:examples} we show that the bound in theorem \ref{thm:CHSH_break_N} can be saturated. Namely, {for any $\mathcal{N}_n(\tilde{P}_\Lambda){\leq 2}$, there exist well-defined $	P\left(y_A,y_B|x_A,x_B\right)$, characterised by a  quasiprobability hidden-variable distribution $\tilde{P}_\Lambda(\lambda_A,\dotsc,\lambda_n)$, that saturate inequality \eqref{eq:CHSH_break_N}.} In addition, analogously to the two measurement result, at the cost of loosening the bound we can ensure that the above witness is also `faithful' by  choosing for all $x$, $\mathcal{N}^{(x)}(\tilde{P}_\Lambda)=\mathcal{N}'({\tilde{P}_\Lambda})$. 
	\section{Example}\label{sec:examples}
	In order to understand how to saturate the Bell inequality from Theorem \ref{thm:CHSH_break}, we rewrite the left-hand side of equation \eqref{equ:quasi_CHSH2} as,
	\begin{align}
\Bigg{|}\sum_{\lambda}\M(\lambda)\,\tilde{P}_\Lambda(\lambda)\Bigg{|}. \label{eq:CHSH_example2}
	\end{align}
We replaced the hidden variables $\lambda_A$ and $\lambda_B$ with a single hidden variable $\lambda$ because our example only uses a single hidden variable $\lambda$. Further,   $\M (\lambda):=\inner{A}_{\lambda}^{0_A}\inner{B}_{\lambda}^{0_B}-\inner{A}_{\lambda}^{0_A}\inner{B}_{\lambda}^{1_B}+\inner{A}_{\lambda}^{1_A}\inner{B}_{\lambda}^{0_B}+\inner{A}_{\lambda}^{1_A}\inner{B}_{\lambda}^{1_B}$ are the scores of each of the $\lambda$-local distributions {, that is }  $-2\leq \M(\lambda)\leq 2$ {holds}.
	
	For a given value of the negativity witness, we exceed the local bound maximally by the simple strategy of weighting classical distributions with $\M (\lambda)=+2$ with positive quasiprobability, while simultaneously taking a classical distribution with $\M (\lambda)=-2$ with {negative weight}.
    To ensure that the {total}  probability distribution $P\left(y_A,y_B|x_A,x_B\right)$ is well-defined we {make} a choice of three deterministic classical distributions with positive weight and a fourth with negative weight. Our four deterministic classical distributions can be denoted $[(-,-)_A,(-,+)_B], [(+,-)_A,(-,-)_B], [(+,+)_A,(+,+)_B]$ and $[(+,-)_A,(-,+)_B]$. { Here, our notation means} that the distributions can be produced by assigning the first pair of symbols to Alice and the second to Bob. Each party chooses to read either the first or second of the symbols given to them (this choice reflects their measurement setting $x_k$) while the outcome of their measurement is determined by the symbol itself; that is{,} $y_k=+1$ ($y_k=-1$) for a plus (minus) sign. This {experimental description of distributing classical information makes clear that these distributions are local}{,} with our hidden variable $\lambda$ indicating which of these sets the source actually produces.
    
    The source produces each of the distributions according to the following quasiprobability distribution, 
         \begin{align}\label{eq:prob_sat_2}
         \tilde{P}_\Lambda(\lambda)=\begin{cases}
 	\frac{4+\mathcal{N}}{12}\qquad\text{ for }\lambda=1,2,3,\\
 	-\frac{\mathcal{N}}{4} \qquad\,\,\,\text{  for }\lambda=4,
 		\end{cases}
     \end{align}
    where{ $\M (\lambda)=2$ if $\lambda=1,2,3$  and  $\M (\lambda)=-2$ if $\lambda=4$}.  We can use tables to represent $\lambda$-local probability distributions, and the total probability distribution is then given as the weighted sum of such tables:
\begin{align}\label{CHSH_table}
&\scriptsize	\frac{4 + \mathcal{N}}{12} \begin{pmatrix}                    \begin{bmatrix}[c|cccc]
		     & \multicolumn{4}{c}{y_A y_B} \\
			x_A x_B & -- & -+ & +- & ++ \\ \hline
			00        & 1 & 0 & 0 & 0 \\
			01        & 0 & 1 & 0 & 0 \\
			10        & 1 & 0 & 0 & 0 \\
			11        & 0 & 1 & 0 & 0 \\
		\end{bmatrix}&+\begin{bmatrix}[c|cccc]
		     & \multicolumn{4}{c}{y_A y_B} \\
			x_A x_B & -- & -+ & +- & ++ \\ \hline
			00        & 0 & 0 & 1 & 0 \\
			01        & 0 & 0 & 1 & 0 \\
			10        & 1 & 0 & 0 & 0 \\
			11        & 1 & 0 & 0 & 0 \\
		\end{bmatrix}+\begin{bmatrix}[c|cccc]
		     & \multicolumn{4}{c}{y_A y_B} \\
			x_A x_B & -- & -+ & +- & ++ \\ \hline
			00        & 0 & 0 & 0 & 1 \\
			01        & 0 & 0 & 0 & 1 \\
			10        & 0 & 0 & 0 & 1 \\
			11        & 0 & 0 & 0 & 1 \\
		\end{bmatrix}
	 \end{pmatrix}	
	\scriptsize	- \frac{\mathcal{N}}{4}			\begin{bmatrix}[c|cccc]
		     & \multicolumn{4}{c}{y_A y_B} \\
			x_A x_B & -- & -+ & +- & ++ \\ \hline
			00        & 0 & 0 & 1 & 0 \\
			01        & 0 & 0 & 0 & 1 \\
			10        & 1 & 0 & 0 & 0 \\
			11        & 0 & 1 & 0 & 0 \\
		\end{bmatrix}
		\nonumber\\ 	\nonumber\\ 
	 &\scriptsize\qquad\quad\,\,	=\frac{1}{12}\begin{bmatrix}[c|cccc]
		     & \multicolumn{4}{c}{y_A y_B} \\
			x_A x_B & -- & -+ & +- & ++ \\ \hline
	00 & 4+\mathcal{N} & 0 & 4-2\mathcal{N} & 4+\mathcal{N} \\
	01 &	     0 & 4+\mathcal{N} & 4+\mathcal{N} &  4-2\mathcal{N} \\
	10 &	     8 - \mathcal{N} & 0 & 0 & 4+\mathcal{N} \\
	11 &		 4+\mathcal{N} & 4-2\mathcal{N} & 0 & 4+\mathcal{N} \\
		\end{bmatrix}.
	\end{align}
The requirement that the resulting total probability distribution must be valid implies
 $\N\leq2$ {which corresponds to the no-signalling limit.} {Furthermore, it is easy to check that said distribution indeed gives a value of $\mathcal{N}$ for the negativity witness.}

The quasiprobabilistic Bell inequality score for this experiment is $ 2+\mathcal{N}$, which upon substituting equation \eqref{eq:prob_sat_2} into the negativity witness, can be seen to saturate the bound. In appendix \ref{app:sat_n_ex} we discuss how one can generalise the above to the $n$-measurement scenario. 
	\section{Conclusion}
	We have shown that there exists a relationship between the amount of negativity allowed in a joint hidden-variable distribution, and the degree to which said distribution can demonstrate nonlocality in a Bell experiment. In particular, theorem  \ref{thm:CHSH_break_N} {introduces a quasiprobabilistic Bell inequality,} which gives us a sharp bound in the scenario of two parties with {$n$} inputs (corresponding to a choice between $n$ measurements) and can be used straightforwardly to reconstruct quantum statistics using nothing more than local, separable classical probability distributions and a quasiprobability distribution over them -- granted an appropriately well spent budget of negativity. 
	
	Our work sits within the long-established tradition of trying to understand quantum theory through interpretative lenses which remove some particular aspect from a classical worldview. Such approaches are wide and varied, including superdeterminism \cite{hossenfelder2020rethinking, adlam2018quantum}; retro-causality \cite{price2012does}; invoking an irreducible role for subjectivity in physics \cite{mermin2014physics, fuchs2014introduction, mueller2020law}; taking physical reality to consist of interacting, separate realms \cite{goldstein2001bohmian, esfeld2014ontology}; allowing the relativity of pre and post-selection \cite{bacciagaluppi2020reverse}; taking Hilbert space to be literal \cite{carroll2021reality}, and so on. Here we add to this list, in that we present an additional way to re-capture the nonlocal features of quantum theory: through having a finite amount of negativity allowed in a hidden-variable distribution over scenarios which are, in themselves, entirely local and classical.  We are not claiming that such quasi distributions are `real' -- only, more modestly, that such a perspective could not be ruled out at this stage. 
	
	Pursuing this line of reasoning, we would hope that our results may help to determine the fundamental restrictions on a system's quasiprobability hidden-variable distribution such that it captures the full character of  physical correlations. Put another way; we know that zero negativity can capture the set of classical correlations, whilst un-bounded negativity can capture the non-signalling set. Given that the set of quantum correlations lies between these two -- what are the restrictions on the quasiprobability hidden-variable distribution which would suffice to identify the full set of quantum correlations? We hope to explore this question in further work. 
	\acknowledgements{We are grateful to Benjamin Yadin, Richard Moles and Thomas Veness for helpful discussions and the Physical Institute for Theoretical Hierarchy (PITH) for encouraging an investigation into this topic. BM acknowledges financial support from the Engineering and Physical Sciences Research Council (EPSRC) under the Doctoral Prize Grant (Grant No.~EP/T517902/1). LF acknowledges financial support from the Austrian Science Fund (FWF) through SFB BeyondC (Grant No.~F7102). BL is supported by Leverhulme Trust Research Project Grant (RPG-2018-213). D.G. acknowledges support from FQXI (grant no. RFP-IPW-1907)}

\begin{thebibliography}{10}

\bibitem{bell1964einstein}
John~S Bell.
\newblock On the einstein podolsky rosen paradox.
\newblock {\em Physics Physique Fizika}, 1(3):195, 1964.

\bibitem{freedman1972experimental}
Stuart~J Freedman and John~F Clauser.
\newblock Experimental test of local hidden-variable theories.
\newblock {\em Physical Review Letters}, 28(14):938, 1972.

\bibitem{caves2002quantum}
Carlton~M Caves, Christopher~A Fuchs, and R{\"u}diger Schack.
\newblock Quantum probabilities as bayesian probabilities.
\newblock {\em Physical review A}, 65(2):022305, 2002.

\bibitem{harrigan2010einstein}
Nicholas Harrigan and Robert~W Spekkens.
\newblock Einstein, incompleteness, and the epistemic view of quantum states.
\newblock {\em Foundations of Physics}, 40(2):125--157, 2010.
z
\bibitem{pusey2012reality}
Matthew~F Pusey, Jonathan Barrett, and Terry Rudolph.
\newblock On the reality of the quantum state.
\newblock {\em Nature Physics}, 8(6):475--478, 2012.

\bibitem{colbeck2012system}
Roger Colbeck and Renato Renner.
\newblock Is a system’s wave function in one-to-one correspondence with its
  elements of reality?
\newblock {\em Physical Review Letters}, 108(15):150402, 2012.

\bibitem{mermin2014physics}
N~David Mermin.
\newblock Physics: Qbism puts the scientist back into science.
\newblock {\em Nature News}, 507(7493):421, 2014.

\bibitem{ringbauer2015measurements}
Martin Ringbauer, Ben Duffus, Cyril Branciard, Eric~G Cavalcanti, Andrew~G
  White, and Alessandro Fedrizzi.
\newblock Measurements on the reality of the wavefunction.
\newblock {\em Nature Physics}, 11(3):249--254, 2015.

\bibitem{dirac1942bakerian}
Paul Adrien~Maurice Dirac.
\newblock Bakerian lecture-the physical interpretation of quantum mechanics.
\newblock {\em Proceedings of the Royal Society of London. Series A.
  Mathematical and Physical Sciences}, 180(980):1--40, 1942.

\bibitem{feynman1987negative}
Richard~P Feynman.
\newblock Negative probability.
\newblock {\em Quantum implications: essays in honour of David Bohm}, pages
  235--248, 1987.

\bibitem{ruzsa1988algebraic}
Imre~Z Ruzsa, G{\'a}bor~J Sz{\'e}kely, G{\'a}bor~J Sz{\'e}kely, et~al.
\newblock {\em Algebraic probability theory}, volume 213.
\newblock John Wiley \& Sons Incorporated, 1988.

\bibitem{khrennikov2007generalized}
A~Yu Khrennikov.
\newblock Generalized probabilities taking values in non-archimedean fields and
  in topological groups.
\newblock {\em Russian Journal of Mathematical Physics}, 14(2):142--159, 2007.

\bibitem{muckenheim1986review}
Wolfgang M{\"u}ckenheim, G~Ludwig, C~Dewdney, PR~Holland, A~Kyprianidis,
  JP~Vigier, N~Cufaro Petroni, MS~Bartlett, and ET~Jaynes.
\newblock A review of extended probabilities.
\newblock {\em Physics Reports}, 133(6):337--401, 1986.

\bibitem{ferrie2011quasi}
Christopher Ferrie.
\newblock Quasi-probability representations of quantum theory with applications
  to quantum information science.
\newblock {\em Reports on Progress in Physics}, 74(11):116001, 2011.

\bibitem{veitch2012negative}
Victor Veitch, Christopher Ferrie, David Gross, and Joseph Emerson.
\newblock Negative quasi-probability as a resource for quantum computation.
\newblock {\em New Journal of Physics}, 14(11):113011, 2012.

\bibitem{de2017theory}
Bruno De~Finetti.
\newblock {\em Theory of probability: A critical introductory treatment},
  volume~6.
\newblock John Wiley \& Sons, 2017.

\bibitem{al2013simulating}
Sabri~W Al-Safi and Anthony~J Short.
\newblock Simulating all nonsignaling correlations via classical or quantum
  theory with negative probabilities.
\newblock {\em Physical Review Letters}, 111(17):170403, 2013.

\bibitem{oas2014exploring}
G~Oas, J~Acacio de~Barros, and C~Carvalhaes.
\newblock Exploring non-signalling polytopes with negative probability.
\newblock {\em Physica Scripta}, 2014(T163):014034, 2014.

\bibitem{popescu1994quantum}
Sandu Popescu and Daniel Rohrlich.
\newblock Quantum nonlocality as an axiom.
\newblock {\em Foundations of Physics}, 24(3):379--385, 1994.

\bibitem{peres2004quantum}
Asher Peres and Daniel~R Terno.
\newblock Quantum information and relativity theory.
\newblock {\em Reviews of Modern Physics}, 76(1):93, 2004.

\bibitem{clauser1969proposed}
John~F Clauser, Michael~A Horne, Abner Shimony, and Richard~A Holt.
\newblock Proposed experiment to test local hidden-variable theories.
\newblock {\em Physical review letters}, 23(15):880, 1969.

\bibitem{2000quantum}
M.A. Nielsen, I.L. Chuang, and I.L. Chuang.
\newblock {\em Quantum Computation and Quantum Information}.
\newblock Cambridge Series on Information and the Natural Sciences. Cambridge
  University Press, 2000.

\bibitem{slofstra2019set}
William Slofstra.
\newblock The set of quantum correlations is not closed.
\newblock In {\em Forum of Mathematics, Pi}, volume~7. Cambridge University
  Press, 2019.

\bibitem{cirel1980quantum}
Boris~S Cirel'son.
\newblock Quantum generalizations of bell's inequality.
\newblock {\em Letters in Mathematical Physics}, 4(2):93--100, 1980.

\bibitem{terhal2000bell}
Barbara~M Terhal.
\newblock Bell inequalities and the separability criterion.
\newblock {\em Physics Letters A}, 271(5-6):319--326, 2000.

\bibitem{lewenstein2000optimization}
Maciej Lewenstein, Barabara Kraus, J~Ignacio Cirac, and P~Horodecki.
\newblock Optimization of entanglement witnesses.
\newblock {\em Physical Review A}, 62(5):052310, 2000.

\bibitem{eisert2007quantitative}
Jens Eisert, Fernando~GSL Brandao, and Koenraad~MR Audenaert.
\newblock Quantitative entanglement witnesses.
\newblock {\em New Journal of Physics}, 9(3):46, 2007.

\bibitem{chabaud2021witnessing}
Ulysse Chabaud, Pierre-Emmanuel Emeriau, and Fr{\'e}d{\'e}ric Grosshans.
\newblock Witnessing wigner negativity.
\newblock {\em arXiv preprint arXiv:2102.06193}, 2021.

\bibitem{bell2004speakable}
John~S Bell and John~Stewart Bell.
\newblock {\em Speakable and unspeakable in quantum mechanics: Collected papers
  on quantum philosophy}.
\newblock Cambridge university press, 2004.

\bibitem{perez2008unbounded}
David P{\'e}rez-Garc{\'\i}a, Michael~M Wolf, Carlos Palazuelos, Ignacio
  Villanueva, and Marius Junge.
\newblock Unbounded violation of tripartite bell inequalities.
\newblock {\em Communications in Mathematical Physics}, 279(2):455--486, 2008.

\bibitem{cope2019bell}
Thomas Cope and Roger Colbeck.
\newblock Bell inequalities from no-signaling distributions.
\newblock {\em Physical Review A}, 100(2):022114, 2019.

\bibitem{wehner2006tsirelson}
Stephanie Wehner.
\newblock Tsirelson bounds for generalized clauser-horne-shimony-holt
  inequalities.
\newblock {\em Physical Review A}, 73(2):022110, 2006.

\bibitem{hossenfelder2020rethinking}
Sabine Hossenfelder and Tim Palmer.
\newblock Rethinking superdeterminism.
\newblock {\em Frontiers in Physics}, 8:139, 2020.

\bibitem{adlam2018quantum}
Emily~Christine Adlam.
\newblock Quantum mechanics and global determinism.
\newblock {\em Quanta}, 7(1):40--53, 2018.

\bibitem{price2012does}
Huw Price.
\newblock Does time-symmetry imply retrocausality? how the quantum world says
  “maybe”?
\newblock {\em Studies in History and Philosophy of Science Part B: Studies in
  History and Philosophy of Modern Physics}, 43(2):75--83, 2012.

\bibitem{fuchs2014introduction}
Christopher~A Fuchs, N~David Mermin, and R{\"u}diger Schack.
\newblock An introduction to qbism with an application to the locality of
  quantum mechanics.
\newblock {\em American Journal of Physics}, 82(8):749--754, 2014.

\bibitem{mueller2020law}
Markus~P Mueller.
\newblock Law without law: from observer states to physics via algorithmic
  information theory.
\newblock {\em Quantum}, 4:301, 2020.

\bibitem{goldstein2001bohmian}
James~T Cushing, Arthur Fine, and Sheldon Goldstein.
\newblock {\em Bohmian mechanics and quantum theory: an appraisal}, volume 184.
\newblock Springer Science \& Business Media, 2013.

\bibitem{esfeld2014ontology}
Michael Esfeld, Mario Hubert, Dustin Lazarovici, and Detlef D{\"u}rr.
\newblock The ontology of bohmian mechanics.
\newblock {\em The British Journal for the Philosophy of Science},
  65(4):773--796, 2014.

\bibitem{bacciagaluppi2020reverse}
Guido Bacciagaluppi and Ronnie Hermens.
\newblock Reverse bell's theorem and relativity of pre-and postselection.
\newblock {\em arXiv preprint arXiv:2002.03935}, 2020.

\bibitem{carroll2021reality}
Sean~M Carroll.
\newblock Reality as a vector in hilbert space.
\newblock {\em arXiv preprint arXiv:2103.09780}, 2021.

\bibitem{braunstein1990wringing}
Samuel~L Braunstein and Carlton~M Caves.
\newblock Wringing out better bell inequalities.
\newblock {\em Annals of Physics}, 202(1):22--56, 1990.

\end{thebibliography}
	
	\appendix
	\section{proof of theorem \ref{thm:CHSH_break}}\label{app:proof_thm1}
	\begin{theorem*}
		Given observers {$A$ and $B$}, each with measurement choice $x_k\in\{0_k,1_k\}$ with outcomes $y_k\in\{-1,+1\}$ whose systems are distributed according to some  {quasi}probability  distribution $\tilde{P}_\Lambda$, then the quasiprobabilistic Bell inequality holds:
		\begin{align}
		\left|{E}(0_A,0_B)-{E}(0_A,1_B)+{E}(1_A,0_B)+{E}(1_A,1_B)\right|\leq\,2+\mathcal{N}(\tilde{P}_\Lambda),
		\end{align}
		\color{black}
		{where}
		{\begin{align}
			\N(\tilde{P}_\Lambda)\coloneq\begin{cases}
			\N_+(\tilde{P}_\Lambda) \qquad\text{ if }{E}(1_A,0_B)+{E}(1_A,1_B)<0,\\
			\N_-(\tilde{P}_\Lambda) \qquad\text{ else,}
			\end{cases}
			\end{align}
		{is a negativity witness, and }
			 $$\N_\pm(\tilde{P}_\Lambda):=\sum_{\lambda_A,\lambda_B}\left[2\pm\left(\inner{A}_{\lambda_A}^{1_A}\inner{B}_{\lambda_B}^{1_B}+\inner{A}_{\lambda_A}^{1_A}\inner{B}_{\lambda_B}^{0_B}\right)\right]\left(\Big{|}\tilde{P}_\Lambda(\lambda_A,\lambda_B)\Big{|}-\tilde{P}_\Lambda(\lambda_A,\lambda_B) \right).$$  }
	\end{theorem*}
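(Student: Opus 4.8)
The plan is to retrace Bell's proof of the CHSH bound and pinpoint precisely where positivity of the hidden-variable distribution gets used. First I would insert \eqref{eq:quasi_hidden} into $E(x_A,x_B)=\sum_{y_A,y_B}y_Ay_B\,P(y_A,y_B|x_A,x_B)$ and carry out the sums over $y_A,y_B$; since each $P_k(\cdot|x_k,\lambda_k)$ is an ordinary probability distribution these collapse to the $\lambda_k$-local expectation values \eqref{eq:local_exp}, so the left-hand side of \eqref{equ:quasi_CHSH2} becomes $\big|\sum_{\lambda_A,\lambda_B}\M(\lambda_A,\lambda_B)\tilde P_\Lambda(\lambda_A,\lambda_B)\big|$ with $\M(\lambda_A,\lambda_B)=\inner{A}^{0_A}_{\lambda_A}\inner{B}^{0_B}_{\lambda_B}-\inner{A}^{0_A}_{\lambda_A}\inner{B}^{1_B}_{\lambda_B}+\inner{A}^{1_A}_{\lambda_A}\inner{B}^{0_B}_{\lambda_B}+\inner{A}^{1_A}_{\lambda_A}\inner{B}^{1_B}_{\lambda_B}$ the ordinary CHSH score of the $\lambda$-local distributions. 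Then I would split $\M=c+d$ with $c(\lambda):=\inner{A}^{0_A}_{\lambda_A}\big(\inner{B}^{0_B}_{\lambda_B}-\inner{B}^{1_B}_{\lambda_B}\big)$ and $d(\lambda):=\inner{A}^{1_A}_{\lambda_A}\big(\inner{B}^{0_B}_{\lambda_B}+\inner{B}^{1_B}_{\lambda_B}\big)$, observing that $d(\lambda)$ is exactly the quantity inside the bracket of $\N_\pm$, that $\sum_\lambda d(\lambda)\tilde P_\Lambda(\lambda)=E(1_A,0_B)+E(1_A,1_B)=:T$, and — using $\big|\inner{B}^{0_B}_{\lambda_B}-\inner{B}^{1_B}_{\lambda_B}\big|+\big|\inner{B}^{0_B}_{\lambda_B}+\inner{B}^{1_B}_{\lambda_B}\big|=2\max\big(|\inner{B}^{0_B}_{\lambda_B}|,|\inner{B}^{1_B}_{\lambda_B}|\big)\le2$ and $|\inner{A}^{x_A}_{\lambda_A}|\le1$ — that $|c(\lambda)|\le2-|d(\lambda)|$, hence the pointwise estimates $\M(\lambda)\le 2-|d(\lambda)|+d(\lambda)$ and $-\M(\lambda)\le2-|d(\lambda)|-d(\lambda)$. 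This $d$-retaining refinement of Bell's bound is the crucial input: the crude bound $|\M(\lambda)|\le2$ alone cannot produce anything better than $2+4\sum_\lambda P^-(\lambda)$.

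Next I would write $\tilde P_\Lambda=P^+-P^-$ for the positive and negative parts, so that $|\tilde P_\Lambda|-\tilde P_\Lambda=2P^-$, $\sum_\lambda P^+(\lambda)=1+\sum_\lambda P^-(\lambda)$ by normalisation, and $\N_\pm(\tilde P_\Lambda)=\sum_\lambda\big(4\pm 2d(\lambda)\big)P^-(\lambda)$. Applying the bound on $\M$ to the $P^+$-part of $\sum_\lambda\M(\lambda)\tilde P_\Lambda(\lambda)$ and the bound on $-\M$ to the $P^-$-part, and using $|d(\lambda)|\ge d(\lambda)$, everything collapses to $\sum_\lambda\M(\lambda)\tilde P_\Lambda(\lambda)\le 2+\sum_\lambda\big(4-|d(\lambda)|-d(\lambda)\big)P^-(\lambda)\le 2+\N_-(\tilde P_\Lambda)$, and this holds irrespective of the sign of $T$. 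That settles one direction.

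The work is in bounding $-\sum_\lambda\M(\lambda)\tilde P_\Lambda(\lambda)=-\sum_\lambda c(\lambda)\tilde P_\Lambda(\lambda)-T$, and here I would treat the case $T\ge0$ (the ``else'' clause): the $-T$ term only helps, and $-\sum_\lambda c(\lambda)\tilde P_\Lambda(\lambda)\le\sum_\lambda|c(\lambda)||\tilde P_\Lambda(\lambda)|\le 2\sum_\lambda|\tilde P_\Lambda(\lambda)|-\sum_\lambda|d(\lambda)||\tilde P_\Lambda(\lambda)|=2+4\sum_\lambda P^-(\lambda)-\sum_\lambda|d(\lambda)||\tilde P_\Lambda(\lambda)|$. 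The step I expect to be the main obstacle — the single place where the hypothesis $T\ge0$ really does the work — is the inequality $\sum_\lambda|d(\lambda)||\tilde P_\Lambda(\lambda)|\ge 2\sum_\lambda d(\lambda)P^-(\lambda)$, which turns the last display into $2+\N_-(\tilde P_\Lambda)$; I would prove it by the chain $\sum_\lambda|d(\lambda)||\tilde P_\Lambda(\lambda)|=\sum_\lambda|d(\lambda)|P^+(\lambda)+\sum_\lambda|d(\lambda)|P^-(\lambda)\ge\sum_\lambda d(\lambda)P^+(\lambda)+\sum_\lambda d(\lambda)P^-(\lambda)\ge 2\sum_\lambda d(\lambda)P^-(\lambda)$, where the last inequality is precisely $T=\sum_\lambda d(\lambda)P^+(\lambda)-\sum_\lambda d(\lambda)P^-(\lambda)\ge0$. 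Combining the two directions gives $\big|\sum_\lambda\M(\lambda)\tilde P_\Lambda(\lambda)\big|\le 2+\N_-(\tilde P_\Lambda)$ when $T\ge0$, and the case $T<0$ follows by the symmetry $\inner{A}^{x_A}_{\lambda_A}\mapsto-\inner{A}^{x_A}_{\lambda_A}$, which flips $\M\mapsto-\M$ and $T\mapsto-T$ and swaps $\N_-\leftrightarrow\N_+$ while leaving $\tilde P_\Lambda$ fixed, yielding $\big|\sum_\lambda\M(\lambda)\tilde P_\Lambda(\lambda)\big|\le 2+\N_+(\tilde P_\Lambda)$. Finally I would record that $2\pm d(\lambda)\ge 2-|d(\lambda)|\ge0$ and $|\tilde P_\Lambda|-\tilde P_\Lambda\ge0$, so $\N_\pm$, and hence $\N$, is non-negative and vanishes on $\P$, i.e.\ is a negativity witness in the sense of Definition~\ref{def:neg_witness}.
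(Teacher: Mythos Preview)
Your proof is correct but takes a genuinely different route from the paper's. The paper follows Bell's 1971 factorisation almost verbatim: it bounds $|E(0_A,0_B)-E(0_A,1_B)|$ by adding and subtracting the cross-terms $\inner{A}^{0_A}\inner{B}^{0_B}\inner{A}^{1_A}\inner{B}^{1_B}$ and $\inner{A}^{0_A}\inner{B}^{1_B}\inner{A}^{1_A}\inner{B}^{0_B}$, applies the triangle inequality to reach $\sum_{\lambda}[2\pm d(\lambda)]\,|\tilde P_\Lambda(\lambda)|$, defines $\N_\pm$ as the excess of this over $\sum_\lambda[2\pm d]\tilde P_\Lambda=2\pm T$, and then chooses the sign so that $\pm T=-|T|$ before a final triangle inequality. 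Your argument instead treats the whole CHSH score at once via $\M=c+d$, uses the sharp pointwise bound $|c|\le 2-|d|$ (equivalent to, but more transparent than, Bell's cross-term trick), and works directly with the decomposition $\tilde P_\Lambda=P^+-P^-$. This makes two things more visible than in the paper: first, the upper direction $\sum_\lambda\M\tilde P_\Lambda\le 2+\N_-$ holds unconditionally, with the sign hypothesis $T\ge0$ entering only in the lower direction through the neat chain $\sum|d||\tilde P_\Lambda|\ge\sum d\,P^++\sum d\,P^-\ge 2\sum d\,P^-$; second, the $T<0$ case follows by a clean symmetry rather than by re-running the estimates. The paper's approach has the advantage of being a line-by-line annotation of Bell's classical proof, pinpointing the single inequality where positivity is invoked; yours is more self-contained and exposes the structure of the bound more explicitly.
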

	\begin{proof}
		{The first part of the proof follows Bell's 1971 derivation of the CHSH inequality \cite{bell2004speakable}.} For brevity in the proof we will just write $\tilde{P}_\Lambda$ as $P$.
		
		We start by rewriting the correlation function, 
		\begin{align}
		{E}(x_A,x_B):=&\sum_{y_A,y_B}y_A y_B\sum_{\lambda_A,\lambda_B}P_A\left(y_A|x_A,\lambda_A\right)P_B\left(y_B|x_B,\lambda_B\right){P}\left(\lambda_A,\lambda_B\right)\nonumber\\ 
		=&\sum_{\lambda_A,\lambda_B}\inner{A}_{\lambda_A}^{x_A}\inner{B}_{\lambda_B}^{x_B}\,{P}(\lambda_A,\lambda_B),\label{eq:correlation}
		\end{align}
		where $\inner{k}_{\lambda_k}^{x_k}${, defined in equation \eqref{eq:local_exp},} is the {$\lambda_k$-local} expectation value for observer $k$ performing measurement $x_k$. Starting with the following difference between correlation functions, 
		\begin{align}
		{E}&(0_A,0_B)-{E}(0_A,1_B)\nonumber\\ =&\sum_{\lambda_A,\lambda_B}\left(\inner{A}_{\lambda_A}^{0_A}\inner{B}_{\lambda_B}^{0_B}-\inner{A}_{\lambda_A}^{0_A}\inner{B}_{\lambda_B}^{1_B}\right){P}(\lambda_A,\lambda_B)\nonumber\\ 
		=&\sum_{\lambda_A,\lambda_B}\left(\inner{A}_{\lambda_A}^{0_A}\inner{B}_{\lambda_B}^{0_B}-\inner{A}_{\lambda_A}^{0_A}\inner{B}_{\lambda_B}^{1_B}\pm\inner{A}_{\lambda_A}^{0_A}\inner{B}_{\lambda_B}^{0_B}\inner{A}_{\lambda_A}^{1_A}\inner{B}_{\lambda_B}^{1_B}\mp\inner{A}_{\lambda_A}^{0_A}\inner{B}_{\lambda_B}^{0_B}\inner{A}_{\lambda_A}^{1_A}\inner{B}_{\lambda_B}^{1_B} \right){P}(\lambda_A,\lambda_B)\nonumber\\ 
		=&\sum_{\lambda_A,\lambda_B}\inner{A}_{\lambda_A}^{0_A}\inner{B}_{\lambda_B}^{0_B}\left(1\pm\inner{A}_{\lambda_A}^{1_A}\inner{B}_{\lambda_B}^{1_B}\right){P}(\lambda_A,\lambda_B)-\sum_{\lambda_A,\lambda_B}\inner{A}_{\lambda_A}^{0_A}\inner{B}_{\lambda_B}^{1_B}\left(1\pm\inner{A}_{\lambda_A}^{1_A}\inner{B}_{\lambda_B}^{0_B}\right){P}(\lambda_A,\lambda_B),\label{eq:same_sign}
		\end{align}
		{where the ``$\pm$'' in equation \eqref{eq:same_sign} is to be understood as either ``$+$'' in all terms or ``$-$'' in all terms.}
		Taking the absolute value of both sides and using the triangular inequality, 
		\begin{align}
		\big{|}{E}&(0_A,0_B)-{E}(0_A,1_B)\big{|}\nonumber\\ \leq&\Big{|}\sum_{\lambda_A,\lambda_B}\inner{A}_{\lambda_A}^{0_A}\inner{B}_{\lambda_B}^{0_B}\left(1\pm\inner{A}_{\lambda_A}^{1_A}\inner{B}_{\lambda_B}^{1_B}\right){P}(\lambda_A,\lambda_B)\Big{|}+\Big{|}\sum_{\lambda_A,\lambda_B}\inner{A}_{\lambda_A}^{0_A}\inner{B}_{\lambda_B}^{1_B}\left(1\pm\inner{A}_{\lambda_A}^{1_A}\inner{B}_{\lambda_B}^{0_B}\right){P}(\lambda_A,\lambda_B)\Big{|}.
		\label{equ:minus_absol}
		\end{align}
		Starting with {the first term on the right-hand side of inequality \eqref{equ:minus_absol}}, we again apply the triangular inequality,
		\begin{align}
		\Big{|}\sum_{\lambda_A,\lambda_B}\inner{A}_{\lambda_A}^{0_A}\inner{B}_{\lambda_B}^{0_B}\left(1\pm\inner{A}_{\lambda_A}^{1_A}\inner{B}_{\lambda_B}^{1_B}\right){P}(\lambda_A,\lambda_B)\Big{|}
		\leq&\sum_{\lambda_A,\lambda_B} \Big{|}\inner{A}_{\lambda_A}^{0_A}\inner{B}_{\lambda_B}^{0_B}\left(1\pm\inner{A}_{\lambda_A}^{1_A}\inner{B}_{\lambda_B}^{1_B}\right){P}(\lambda_A,\lambda_B) \Big{|}\nonumber\\ 
		=&\sum_{\lambda_A,\lambda_B}\Big{|}\inner{A}_{\lambda_A}^{0_A}\inner{B}_{\lambda_B}^{0_B}\Big{|}\Big{|}\left(1\pm\inner{A}_{\lambda_A}^{1_A}\inner{B}_{\lambda_B}^{1_B}\right){P}(\lambda_A,\lambda_B) \Big{|}.
		\end{align}
		As $y_k\in\{-1,+1\}$ we can say $\Big{|}\inner{k}_{\lambda_k}^{x_k}\Big{|}\leq1\,\,\forall\,k$, we can write, 
		\begin{align}
		\Big{|}\sum_{\lambda_A,\lambda_B}\inner{A}_{\lambda_A}^{0_A}\inner{B}_{\lambda_B}^{0_B}\left(1\pm\inner{A}_{\lambda_A}^{1_A}\inner{B}_{\lambda_B}^{1_B}\right){P}(\lambda_A,\lambda_B)\Big{|}\leq&\sum_{\lambda_A,\lambda_B}\Big{|}\left(1\pm\inner{A}_{\lambda_A}^{1_A}\inner{B}_{\lambda_B}^{1_B}\right){P}(\lambda_A,\lambda_B) \Big{|}\nonumber\\ 
		=&\sum_{\lambda_A,\lambda_B}\Big{|}\left(1\pm\inner{A}_{\lambda_A}^{1_A}\inner{B}_{\lambda_B}^{1_B}\right)\Big{|}\Big{|}{P}(\lambda_A,\lambda_B) \Big{|}\nonumber\\ 
		=&\sum_{\lambda_A,\lambda_B}\left(1\pm\inner{A}_{\lambda_A}^{1_A}\inner{B}_{\lambda_B}^{1_B}\right)\Big{|}{P}(\lambda_A,\lambda_B) \Big{|}
		\label{equ:simplifying}
		\end{align}
		where we have used the fact that $\left(1\pm\inner{A}_{\lambda_A}^{1_A}\inner{B}_{\lambda_B}^{1_B}\right)$ is necessarily non-negative because of the choice of eigenvalues, $y_k\in\{-1,+1\}$. 
		
		{Similarly, we find for the second term on the right-hand side of inequality~\eqref{equ:minus_absol}
			\begin{align}
			\Big{|}\sum_{\lambda_A,\lambda_B}\inner{A}_{\lambda_A}^{0_A}\inner{B}_{\lambda_B}^{1_B}\left(1\pm\inner{A}_{\lambda_A}^{1_A}\inner{B}_{\lambda_B}^{0_B}\right){P}(\lambda_A,\lambda_B)\Big{|}\leq &\sum_{\lambda_A,\lambda_B}\left(1\pm\inner{A}_{\lambda_A}^{1_A}\inner{B}_{\lambda_B}^{0_B}\right)\Big{|}{P}(\lambda_A,\lambda_B) \Big{|}.\label{equ:simplifying_2}
			\end{align}
			By adding inequalities \eqref{equ:simplifying} and \eqref{equ:simplifying_2} we find the following upper bound for the left-hand side of inequality \eqref{equ:minus_absol},
			\begin{align}
			\big{|}{E}(0_A,0_B)-{E}(0_A,1_B)\big{|} \leq&\sum_{\lambda_A,\lambda_B}\left[2\pm\left(\inner{A}_{\lambda_A}^{1_A}\inner{B}_{\lambda_B}^{1_B}+\inner{A}_{\lambda_A}^{1_A}\inner{B}_{\lambda_B}^{0_B}\right)\right]\Big{|}{P}(\lambda_A,\lambda_B) \Big{|}.
			\label{equ:minus_absol_total}
			\end{align}
			So far the proof followed Bell's 1971 derivation \cite{bell2004speakable} of the CHSH inequality. In Bell's derivation, one assumes that the joint probability distribution is positive, $P(\lambda_A,\lambda_B)\geq 0$, which, using the definition of the correlation function and the triangle inequality, leads to the well-known CHSH inequality, $\left|{E}(0_A,0_B)-{E}(0_A,1_B)+{E}(1_A,0_B)+{E}(1_A,1_B)\right|\leq\,2$.}
		
		{We have to take another approach because here $P(\lambda_A,\lambda_B)$ can be a quasiprobability distribution and thus take negative values. For each of the two inequalities \eqref{equ:minus_absol_total} (corresponding to the choice for ``$\pm$''), we define a negativity witness $\N_\pm(P)$ for some normalised distribution $P\in\tilde{\P}$ as the difference obtained by replacing $\left|P(\lambda_A,\lambda_B)\right|$ with $P(\lambda_A,\lambda_B)$ in the right-hand side of inequality \eqref{equ:minus_absol_total},
			\begin{align}
			\N_\pm(P):=\sum_{\lambda_A,\lambda_B}\left[2\pm\left(\inner{A}_{\lambda_A}^{1_A}\inner{B}_{\lambda_B}^{1_B}+\inner{A}_{\lambda_A}^{1_A}\inner{B}_{\lambda_B}^{0_B}\right)\right]\left[\left|P\left(\lambda_A,\lambda_B\right)\right|-P\left(\lambda_A,\lambda_B\right) \right].
			\end{align}
			Note that although this negativity witness is perfectly valid according to definition \ref{def:neg_witness}, it is not faithful because $2\pm\left(\inner{A}_{\lambda_A}^{1_A}\inner{B}_{\lambda_B}^{1_B}+\inner{A}_{\lambda_A}^{1_A}\inner{B}_{\lambda_B}^{0_B}\right)$ may be zero for $P\in \tilde{\P}/\P$, i.e., $\N_\pm(P)$ may be zero for a quasiprobability distribution. Nevertheless we can now write inequality \eqref{equ:minus_absol_total} as,
			\begin{align}
			\big{|}{E}(0_A,0_B)-{E}(0_A,1_B)\big{|} \leq&\,\sum_{\lambda_A,\lambda_B}\left[2\pm\left(\inner{A}_{\lambda_A}^{1_A}\inner{B}_{\lambda_B}^{1_B}+\inner{A}_{\lambda_A}^{1_A}\inner{B}_{\lambda_B}^{0_B}\right)\right]{P}(\lambda_A,\lambda_B) +\N_\pm(P).
			\label{equ:minus_absol_total_2}
			\end{align}
			The first term on the right-hand side of inequality \eqref{equ:minus_absol_total_2} can then be simplified using the definition of the correlation function \eqref{eq:correlation} and that ${P}(\lambda_A,\lambda_B)$ is normalized,
			\begin{align}
			&\sum_{\lambda_A,\lambda_B}\left[2\pm\left(\inner{A}_{\lambda_A}^{1_A}\inner{B}_{\lambda_B}^{1_B}+\inner{A}_{\lambda_A}^{1_A}\inner{B}_{\lambda_B}^{0_B}\right)\right]{P}(\lambda_A,\lambda_B)\notag \\
			&=\sum_{\lambda_A,\lambda_B}2{P}(\lambda_A,\lambda_B)\pm\sum_{\lambda_A,\lambda_B}\left(\inner{A}_{\lambda_A}^{1_A}\inner{B}_{\lambda_B}^{1_B}+\inner{A}_{\lambda_A}^{1_A}\inner{B}_{\lambda_B}^{0_B}\right){P}(\lambda_A,\lambda_B)\notag \\
			&=2\pm\left[{E}(1_A,0_B)+{E}(1_A,1_B)\right].
			\end{align} 
			Thus, inequality \eqref{equ:minus_absol_total_2} becomes
			\begin{align}
			\big{|}{E}(0_A,0_B)-{E}(0_A,1_B)\big{|} \leq&\,2\pm\left[{E}(1_A,0_B)+{E}(1_A,1_B)\right]+\N_\pm(P).
			\label{equ:minus_absol_total_tight_3}
			\end{align}
			Now, we choose the inequality corresponding to ``$+$'' if $\left[{E}(1_A,0_B)+{E}(1_A,1_B)\right]$ is negative, and the inequality corresponding to ``$-$'' else.  This allows us to write
			\begin{align}
			\big{|}{E}(0_A,0_B)-{E}(0_A,1_B)\big{|} \leq&\,2-\left|{E}(1_A,0_B)+{E}(1_A,1_B)\right|+\N(P), \label{eq:almost_there}
			\end{align}
			where we defined 
			\begin{align}
			\N(P)\coloneq\begin{cases}
			\N_+(P) \qquad\text{ if }{E}(1_A,0_B)+{E}(1_A,1_B)<0,\\
			\N_-(P) \qquad\text{ else.}
			\end{cases}
			\end{align} 
			From inequality \eqref{eq:almost_there}, we obtain
			\begin{align}
			\big{|}{E}(0_A,0_B)-{E}(0_A,1_B)\big{|} +\left|{E}(1_A,0_B)+{E}(1_A,1_B)\right|\leq&\,2+\N(P),
			\end{align}
			and with one final use of the triangular inequality we find a CHSH-type inequality for arbitrary $P\in \tilde P$,
			\begin{align}
			\left|{E}(0_A,0_B)-{E}(0_A,1_B) +{E}(1_A,0_B)+{E}(1_A,1_B)\right|\leq&\,2+\N(P),
			\end{align}
			completing the proof.}
		
	\end{proof}
	\section{proof of theorem \ref{thm:CHSH_break_N}}\label{app:proof_thm2}
	\begin{theorem*}
		Given observers {$A$ and $B$}, each with $n\geq2$ measurements 
		$x_k\in\{0_k,1_k,\dots,{n-1}_k\}$ with outcomes $y_k\in\{-1,+1\}$ whose systems are distributed according to some  {quasi}probability distribution $\tilde{P}_\Lambda$, 
		{\begin{align}\label{eq:CHSH_break_N_app}
			\left|\sum^{n-1}_{i=0}{E}(i_A,i_B)+\sum^{n-1}_{i=1}{E}(i_A,i-1_B)-{E}(0_A,n-1_B)\right|\leq\,2n-2+{\mathcal{N}_n({\tilde{P}_\Lambda})}, 
			\end{align}}
		where { $\mathcal{N}_n({\tilde{P}_\Lambda})=\sum^{n-1}_{i=1}\N^{(i)}(\tilde{P}_\Lambda)$ {is a negativity witness} with}
		\begin{align}
		\N^{(x)}(\tilde{P}_\Lambda)\coloneq\begin{cases}
		\N^{(x)}_+(\tilde{P}_\Lambda) \qquad\text{ if }{E}(0_A,x_B)+{E}(0_A,x-1_B)<0,\\
		\N^{(x)}_-(\tilde{P}_\Lambda) \qquad\text{ else,}
		\end{cases}
		\end{align}
		where $\N^{(x)}_{\pm}({\tilde{P}_\Lambda})\coloneqq\sum_{\lambda_A,\lambda_B}\left[2\pm\left(\inner{A}_{\lambda_A}^{x_A}\inner{B}_{\lambda_B}^{x_B}+\inner{A}_{\lambda_A}^{x_A}\inner{B}_{\lambda_B}^{x-1_B}\right)\right]\left(\Big{|}{\tilde{P}_\Lambda}(\lambda_A,\lambda_B)\Big{|}-{\tilde{P}_\Lambda}(\lambda_A,\lambda_B) \right)$.
	\end{theorem*}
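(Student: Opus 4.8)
The plan is to prove \eqref{eq:CHSH_break_N_app} by induction on the number of settings $n$, with Theorem~\ref{thm:CHSH_break} doing the work at every step. Write $S_n$ for the quantity inside the absolute value on the left-hand side of \eqref{eq:CHSH_break_N_app}. With the conventions $S_1:=0$ and $\mathcal{N}_1:=0$ (empty sum), the claimed bound reads $0\le 0$ for $n=1$ and is exactly Theorem~\ref{thm:CHSH_break} for $n=2$ (with $\mathcal{N}_2=\mathcal{N}^{(1)}$), so these provide the base case. The first concrete step I would carry out is to record the telescoping identity
\begin{align}
S_n-S_{n-1}={E}(0_A,(n-2)_B)-{E}(0_A,(n-1)_B)+{E}((n-1)_A,(n-2)_B)+{E}((n-1)_A,(n-1)_B),
\end{align}
which one checks by expanding $S_n$ and $S_{n-1}$ from their definitions and cancelling common terms. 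The point is that the right-hand side is precisely the four-correlator CHSH combination of Theorem~\ref{thm:CHSH_break}, read with Alice's pair of settings taken to be $\{0_A,(n-1)_A\}$ and Bob's pair taken to be $\{(n-2)_B,(n-1)_B\}$.

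For the inductive step, I would assume \eqref{eq:CHSH_break_N_app} for $n-1$ and apply the triangle inequality $|S_n|\le|S_{n-1}|+|S_n-S_{n-1}|$. The first term is bounded by the induction hypothesis, $|S_{n-1}|\le 2(n-1)-2+\mathcal{N}_{n-1}$; the second term is bounded by Theorem~\ref{thm:CHSH_break} applied to the CHSH expression identified above, i.e.\ under the substitution $1_A\mapsto(n-1)_A$, $0_B\mapsto(n-2)_B$, $1_B\mapsto(n-1)_B$ (with Alice's $0_A$ unchanged), which gives $|S_n-S_{n-1}|\le 2+\mathcal{N}^{(n-1)}(\tilde{P}_\Lambda)$, the witness $\mathcal{N}^{(n-1)}_\pm$ and the branch rule \eqref{eq:neg_cases_n} being read off from Theorem~\ref{thm:CHSH_break} under this substitution. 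Adding the two bounds and using $\mathcal{N}_n=\sum_{i=1}^{n-1}\mathcal{N}^{(i)}=\mathcal{N}_{n-1}+\mathcal{N}^{(n-1)}$ yields $|S_n|\le 2n-2+\mathcal{N}_n$, closing the induction. That $\mathcal{N}_n$ is a valid negativity witness is then immediate: each $\mathcal{N}^{(i)}$ carries the factor $|\tilde{P}_\Lambda(\lambda_A,\lambda_B)|-\tilde{P}_\Lambda(\lambda_A,\lambda_B)$, which vanishes whenever $\tilde{P}_\Lambda\in\P$, so the sum does too.

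The argument is essentially bookkeeping, and I expect the only real pitfall to lie in the inductive step. One must verify that the telescoping identity comes out with exactly the sign pattern of a genuine CHSH expression, so that Theorem~\ref{thm:CHSH_break} can be invoked verbatim rather than applied to a sign-flipped or Alice/Bob-swapped variant (this matters for obtaining $\mathcal{N}^{(n-1)}$ in the form stated, built from Alice's setting $(n-1)_A$ rather than from a Bob setting), and one must check that the ``$\pm$'' branch that Theorem~\ref{thm:CHSH_break} selects for the relabelled expression coincides with the branch prescribed in \eqref{eq:neg_cases_n}. Once the substitution dictionary $\{0_A,1_A\}\leftrightarrow\{0_A,(n-1)_A\}$, $\{0_B,1_B\}\leftrightarrow\{(n-2)_B,(n-1)_B\}$ is pinned down, the rest follows mechanically: the $n-1$ increments $S_i-S_{i-1}$ each contribute one ``$2$'', summing to $2n-2$, and one witness term $\mathcal{N}^{(i)}$, summing to $\mathcal{N}_n$.
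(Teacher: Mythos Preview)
Your approach is essentially the paper's: induction on $n$, with the triangle inequality splitting $S_n$ (after adding and subtracting $E(0_A,(n-2)_B)$) into the $(n{-}1)$-setting expression bounded by the hypothesis and a four-correlator CHSH block bounded by Theorem~\ref{thm:CHSH_break} with settings $\{0_A,(n{-}1)_A\}\times\{(n{-}2)_B,(n{-}1)_B\}$. The pitfall you flag about the branch selection is real: under your substitution Theorem~\ref{thm:CHSH_break} hands back the condition $E((n{-}1)_A,(n{-}2)_B)+E((n{-}1)_A,(n{-}1)_B)<0$, not the $E(0_A,\cdot)$ condition in~\eqref{eq:neg_cases_n}; the paper's own proof inherits the same mismatch, so this is a cosmetic inconsistency in the stated branch rule rather than a defect in the inductive argument, and either choice yields a valid negativity witness with the same inequality.
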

	{\begin{proof}
			The proof is similar to the creation of chained CHSH inequalities, see \cite{braunstein1990wringing} for an intuitive description, and works by induction in $n$.\\
			{\textit{Anchor step}} $n=2$: see theorem \ref{thm:CHSH_break}.\\
			\textit{Inductive step}: Suppose that theorem \ref{thm:CHSH_break_N} holds for $n=k$. We will prove the theorem for $n=k+1$. Starting from the left-hand side of equation \eqref{eq:CHSH_break_N_app} for $n=k+1$, we find
			\begin{align}
			&\left|\sum^{k}_{i=0}{E}(i_A,i_B)+\sum^{k}_{i=1}{E}(i_A,i-1_B)-{E}(0_A,k_B)\right|\\
			&=\left|\sum^{k-1}_{i=0}{E}(i_A,i_B)+\sum^{k-1}_{i=1}{E}(i_A,i-1_B)-{E}(0_A,k_B)+{E}(k_A,k_B)+{E}(k_A,k-1_B)\right|\\
			&=\left|\sum^{k-1}_{i=0}{E}(i_A,i_B)+\sum^{k-1}_{i=1}{E}(i_A,i-1_B)-{E}(0_A,k_B)+{E}(k_A,k_B)+{E}(k_A,k-1_B)\right.\notag\\&\qquad+{E}(0_A,k-1_B)-{E}(0_A,k-1_B)\Biggr|\\
			&\leq\left|\sum^{k-1}_{i=0}{E}(i_A,i_B)+\sum^{k-1}_{i=1}{E}(i_A,i-1_B)-{E}(0_A,k-1_B)\right|+\notag\\&\qquad \left|{E}(k_A,k_B)+{E}(k_A,k-1_B)+{E}(0_A,k-1_B)-{E}(0_A,k_B)\right|\label{eq:induction_ineq}\\
			&\leq2k-2+\sum^{k-1}_{i=1}\N^{(i)}(\tilde{P}_\Lambda)+2+\N^{(k)}(\tilde{P}_\Lambda)\\
			&=2(k+1)-2+\N_{k+1}(\tilde{P}_\Lambda),
			\end{align}
			which concludes the induction. The inequality in line \eqref{eq:induction_ineq} is the triangle inequality, and we proceed from that line by using the induction hypothesis and theorem \ref{thm:CHSH_break} for measurements $0_A$, $k_A$ for Alice, and $k-1_B$, and $k_B$ for Bob.
	\end{proof}}
	\section{Saturation of {the} $n$-measurement quasiprobabilistic Bell inequality}\label{app:sat_n_ex}
	We can generalise the $2$-measurement example from the main text to $n$ measurements in the following way. {Using equation \eqref{eq:correlation}, we rewrite the left-hand side of equation  \eqref{eq:CHSH_break_N} as,
	\begin{align}
\Bigg{|}\sum_{\lambda}\M(\lambda)\,\tilde{P}_\Lambda(\lambda)\Bigg{|}, 
	\end{align}
where we use only a single hidden variable $\lambda$, and 
\begin{align}
\M (\lambda):=\sum^{n-1}_{i=0}\inner{A}_{\lambda}^{i_A}\inner{A}_{\lambda}^{i_B}+\sum^{n-1}_{i=1}\inner{A}_{\lambda}^{i_A}\inner{A}_{\lambda}^{i-1_B}-\inner{A}_{\lambda}^{0_A}\inner{A}_{\lambda}^{n-1_B}
\end{align}
are the scores of each of the $\lambda$-local distributions, that is  $-(2n-2)\leq \M(\lambda)\leq 2n-2$ {holds}.}

We again consider $4$ classical scenarios, $3$ of which achieve {a score of} $2n-2$ but now the last achieving {a score of} $2n-6$.  The source produces each of the distributions according to the following quasiprobability distribution,
        \begin{align}
     \tilde{P}_\Lambda(\lambda)=\begin{cases}
 	\frac{4+\mathcal{N}_n}{12}\qquad\text{ for }\lambda=1,2,3,\\
 	-\frac{\mathcal{N}_n}{4} \qquad\,\,\,\text{  for }\lambda=4,
 		\end{cases}\label{eq:quasi_exmaple_n}
     \end{align}
where $\lambda=1,2,3$ corresponds to classical distributions with {score} $2n-2$, and $\lambda=4$ to $2n-6$. We can see that this distribution saturates the $n$-measurement quasiprobabilistic Bell inequality from theorem \ref{thm:CHSH_break_N}, 
\begin{align}\label{equ:quasi_n_measurement}
    (2n-2)\tilde{P}_\Lambda(1) +(2n-2)\tilde{P}_\Lambda(2)+  (2n-2)\tilde{P}_\Lambda(3)+(2n-6)\tilde{P}_\Lambda(4)=2n-2+\mathcal{N}_n.
\end{align}
We now need to come up with the $\lambda$-local {probability} distributions which result in a well-defined $P\left(y_A,y_B|x_A,x_B\right)$ and gives the correct value for the witness $\mathcal{N}_n$. To do this we can generalise the classical distributions from the main text for $n$ measurements, using the same notation as previously, such classical distributions are, 
\begin{align}
    [(\overbrace{-,\dots,-}^{n})_A,(\overbrace{-,\dots,-}^{n-1},+)_B]^{\lambda=1},& [(+,\overbrace{-,\dots,-}^{n-1})_A,(\overbrace{-,\dots,-}^{n})_B]^{\lambda=2}, [(\overbrace{+,\dots,+}^{n})_A,(\overbrace{+,\dots,+}^{n})_B]^{\lambda=3},\nonumber\\
    &[(+,\overbrace{-,\dots,-}^{n-1})_A,(\overbrace{-,\dots,-}^{n-1},+)_B]^{\lambda=4}.\label{equ:n-measuremnt_example}
\end{align}
It {is easy to check} that all such distributions  achieve for $\lambda=1,2,3$ {a score}  $2n-2$, and {for} $\lambda=4$,  $2n-6$.
{Since the distribution for $\lambda=4$ enters into the total probability distribution, $P\left(y_A,y_B|x_A,x_B\right)$, with negative weight, the other $\lambda$-local distributions (with $\lambda=1,2,3$) must compensate for that negativity to ensure that the total probability distribution is valid.

It is easy to see that this is indeed the case by observing that for each combination of Alice and Bob's signs for $\lambda=4$ that same combination of symbols appear in the same places for at least one of the other distributions.}
We also find that requiring positivity of the total probability distribution also gives us the no-signalling condition:

\begin{align}
    \tilde{P}_\Lambda(\lambda)+ \tilde{P}_\Lambda(4)\geq0\text{ for } \lambda=1,2,3 \implies\N_n\leq2 \quad \forall n\geq2.  
\end{align}    
The final thing to check is that said distributions in equation \eqref{equ:n-measuremnt_example} coupled with the quasiprobability distribution in equation \eqref{eq:quasi_exmaple_n}  gives the required {value $\mathcal{N}_n$} for the {negativity} witness {$\mathcal{N}_n({\tilde{P}_\Lambda})=\sum^{n-1}_{i=1}\N^{(i)}(\tilde{P}_\Lambda)$ with}
		\begin{align}
		\N^{(x)}(\tilde{P}_\Lambda)\coloneq\begin{cases}
		\N^{(x)}_+(\tilde{P}_\Lambda) \qquad\text{ if }{E}(0_A,x_B)+{E}(0_A,x-1_B)<0,\\
		\N^{(x)}_-(\tilde{P}_\Lambda) \qquad\text{ else,}
		\end{cases}
		\end{align}
		where $\N^{(x)}_{\pm}({\tilde{P}_\Lambda})\coloneqq\sum_{\lambda_A,\lambda_B}\left[2\pm\left(\inner{A}_{\lambda_A}^{x_A}\inner{B}_{\lambda_B}^{x_B}+\inner{A}_{\lambda_A}^{x_A}\inner{B}_{\lambda_B}^{x-1_B}\right)\right]\left(\Big{|}{\tilde{P}_\Lambda}(\lambda_A,\lambda_B)\Big{|}-{\tilde{P}_\Lambda}(\lambda_A,\lambda_B) \right)$. 
		
Firstly, we can see by going through the distributions in equation \eqref{equ:n-measuremnt_example} that for all measurement choices $x$, ${E}(0_A,x_B)+{E}(0_A,x-1_B)>0$ meaning that the witness we calculate for all $x$ in the sum of $\mathcal{N}_n({\tilde{P}_\Lambda})$ is $\N^{(x)}_-(\tilde{P}_\Lambda)$. We then go through the expectation values in the definition of $\N^{(x)}_-(\tilde{P}_\Lambda)$ for all $x$ for the $\lambda=4$ distribution given in equation \eqref{equ:n-measuremnt_example}, from which we can see,
\begin{align}
2-\left(\inner{A}_{4}^{x_A}\inner{B}_{4}^{x_B}+\inner{A}_{4}^{x_A}\inner{B}_{4}^{x-1_B}\right)=\begin{cases}
 	0\qquad\text{ for }x=1,\dots,n-2\\
 	2 \qquad\,\,\,\text{  for }x=n-1.
 		\end{cases}
\end{align}
meaning that upon calculating $\mathcal{N}_n({\tilde{P}_\Lambda})\sum^{n-1}_{i=1}\N^{(i)}(\tilde{P}_\Lambda)$, we get
\begin{align}
    \mathcal{N}_n({\tilde{P}_\Lambda})=&2\left(\Big{|}{\tilde{P}_\Lambda}(4)\Big{|}-{\tilde{P}_\Lambda}(4) \right)\nonumber\\ 
    =&\mathcal{N}_n,
\end{align}
as required.

\end{document}